\newtheorem{lemma}{\bf Lemma}
\newtheorem{theorem}{\bf Theorem}
\newcommand*{\rom}[1]{\expandafter\@slowromancap\romannumeral #1@}
\begin{document}
\title{\textbf{Asymptotic Analysis of Distributed Bayesian Detection with Byzantine Data}}
\author{Bhavya~Kailkhura,~\IEEEmembership{Student Member,~IEEE}, Yunghsiang~S. Han,~\IEEEmembership{Fellow,~IEEE},
Swastik~Brahma,~\IEEEmembership{Member,~IEEE}, Pramod~K.~Varshney,~\IEEEmembership{Fellow,~IEEE}
\thanks{This work was supported in part by ARO under Grant W911NF-13-2-0040 and National Science Council of Taiwan, under grants NSC 99-2221-E-011-158 -MY3, NSC 101-2221-E-011-069 -MY3. Han's work was completed during his visit to Syracuse University from 2012 to 2013.}
\thanks{B. Kailkhura, S. Brahma and P. K. Varshney are with Department of EECS, Syracuse University, Syracuse, NY 13244. (email: bkailkhu@syr.edu; skbrahma@syr.edu; varshney@syr.edu)}
\thanks{Y. S. Han is with EE Department, National Taiwan University of Science and Technology, Taiwan, R. O. C. (email: yshan@mail.ntust.edu.tw)}}
\date{}
\maketitle
\begin{abstract} 
In this letter, we consider the problem of distributed Bayesian detection in the presence of data falsifying Byzantines in the network.  
The problem of distributed detection is formulated as a binary hypothesis test at the fusion center (FC) based on 1-bit data sent by the sensors. Adopting Chernoff information as our performance metric, we study the detection performance of the system under Byzantine attack in the asymptotic regime.  
The expression for minimum attacking power required by the Byzantines to blind the FC is obtained. More specifically, we show that above a certain fraction of Byzantine attackers in the network, the detection scheme becomes completely incapable of utilizing the sensor data for detection.
When the fraction of Byzantines is not sufficient to blind the FC, we also provide closed form expressions for the optimal attacking strategies for the Byzantines that most degrade the detection performance.
\end{abstract}
\begin{keywords}
Bayesian detection, Data falsification, Byzantine Data, Chernoff information, Distributed detection
\end{keywords}

\section{Introduction}
Distributed detection is a well studied topic in the detection
theory literature~\cite{Varshney, Viswanathan, veer}. 
In distributed detection systems, due to bandwidth and energy constraints, the nodes often make a 1-bit local decision regarding the presence or absence of a phenomenon before sending it to the fusion center (FC). Based on the local decisions transmitted by the nodes, the FC makes a global decision about the presence or absence of the phenomenon of interest. 
The performance of such systems strongly depends on
the reliability of the nodes in the network. The distributed nature of such systems makes them quite vulnerable
to different types of attacks. 
One typical attack on such networks is a Byzantine attack. While Byzantine attacks (originally proposed by \cite{Lamport}) may, in general, refer to many types of malicious behavior, our focus in this letter is on data-falsification attacks ~\cite{frag, Rifa, Marano, Rawat, bhavyaj, Kailkhura2013, Kailkhura, aditya}. 

Distributed detection in the presence of Byzantine
attacks has been explored in the past in~\cite{Marano,Rawat}, where the problem of determining the most effective attacking strategy of the Byzantine nodes was explored. In \cite{Marano}, the authors considered the Neyman-Pearson (NP) setup and determined the optimal attacking strategy which minimizes the detection error exponent. This approach, based on Kullback-Leibler divergence (KLD), is analytically tractable and yields approximate results in non-asymptotic cases. They also assumed that the Byzantines know the true hypothesis, which obviously is not satisfied in practice but does provide a bound. 
In \cite{Rawat}, the authors analyzed the same problem in the context of collaborative spectrum sensing under Byzantine Attacks. They relaxed the  assumption of perfect knowledge of the hypotheses by assuming that the Byzantines
determine the knowledge about the true hypotheses from their own sensing observations.
Schemes for Byzantine node identification have
been proposed in ~\cite{aditya,Rawat,a1,a2,covert}.
Our focus in this letter is considerably different from Byzantine node identification schemes in that we do not try to
authenticate the data; we determine the most effective attacking
strategies and distributed detection schemes that are robust against attacks.

All the approaches discussed so far for distributed detection schemes robust to Byzantine attacks consider distributed detection under the Neyman-Pearson (NP) setup. In contrast, we
focus on the impact of Byzantine nodes on distributed Bayesian detection,
which has not been considered in the past. Adopting Chernoff information as our performance metric, we study the performance of distributed detection systems with Byzantines in the asymptotic regime.
We are interested in answering the following questions.
\begin{itemize}
\item From the Byzantines' perspective, what is the most effective attacking strategy?
\item What is the minimum fraction of Byzantines needed to blind the FC?
\item From the FC's perspective, knowing the fraction of Byzantines in the network, or an upper
bound thereof, what is the achievable performance not knowing
the identities of compromised nodes?
\end{itemize}

The signal processing problem considered in this letter is
most similar to~\cite{Rawat}. Our results, however, are not a direct application of those
in~\cite{Rawat}. While as in~\cite{Rawat}, we
are also interested in the worst distribution pair, our objective function and, therefore, the techniques to find them are different. 
In contrast to~\cite{Rawat}, where only optimal strategies to blind the FC were obtained, we also provide closed form expressions for the optimal attacking strategies for the Byzantines that most degrade the detection performance when the fraction of Byzantines is not sufficient to blind the FC. 
Indeed, finding the optimal Byzantine
attacking strategies is only the first step toward designing a robust distributed detection system. 
%the optimal attacking distributions have the minimax property and, therefore, 
The knowledge of optimal attack strategies can be further used to implement the optimal detector at the FC.

\begin{figure}[t!]
  \centering
    \includegraphics[height=0.25\textheight, width=0.4\textwidth]{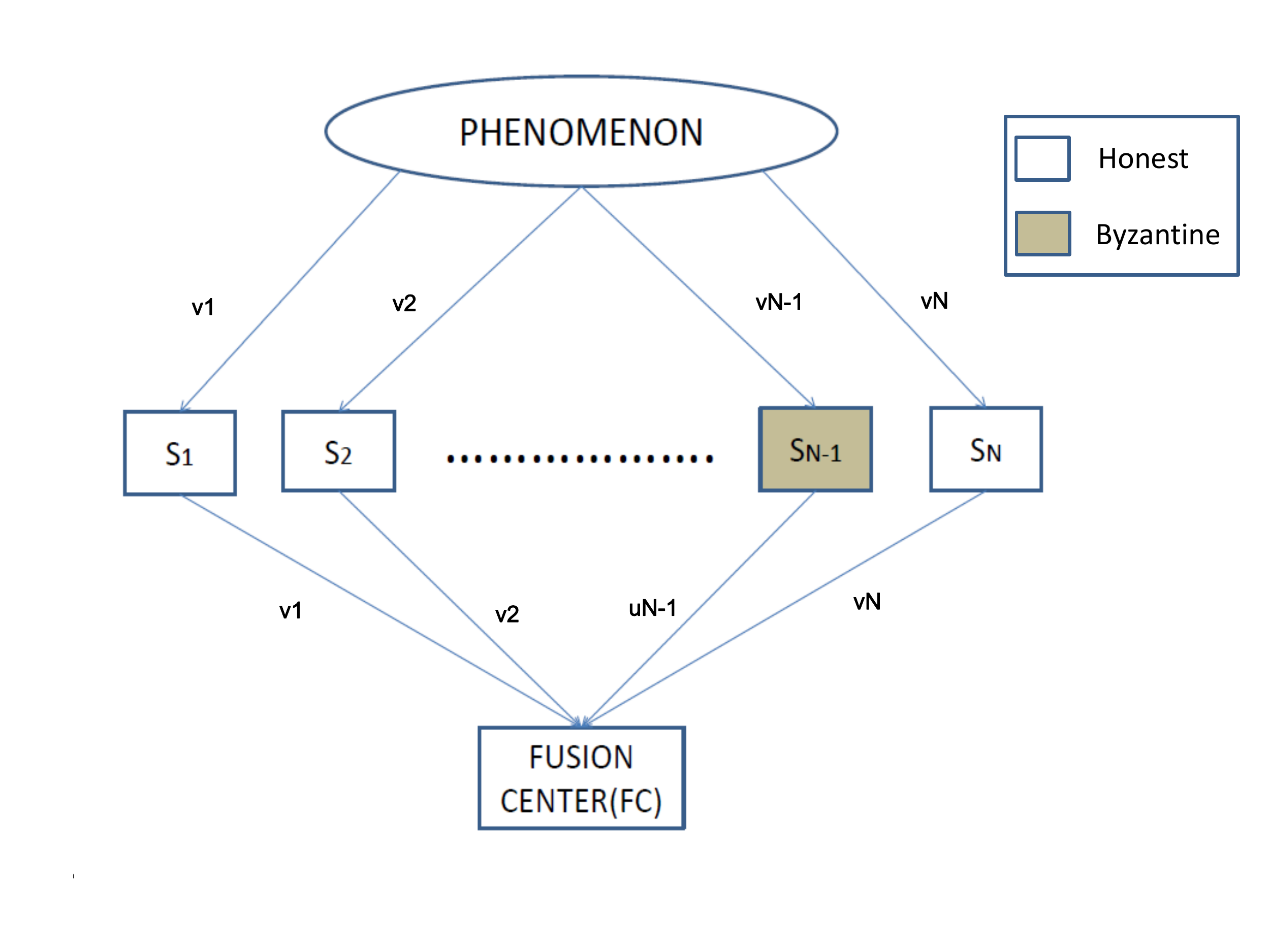}
    \caption{System Model}
    \label{model}
\end{figure}

\section{Distributed detection in the presence of Byzantines}
\label{sec2}
Consider two hypotheses $H_{0}$ (signal is absent)
and $H_{1}$ (signal is present).
Also, consider a parallel network (see Figure~\ref{model}), comprised of a central entity (known as the Fusion Center (FC))
and a set of $N$ sensors (nodes), which
faces the task of determining which of the two hypotheses is true.
Prior probabilities of the two hypotheses $H_{0}$ and $H_{1}$
are denoted by $P_{0}$ and $P_{1}$, respectively. 
The sensors observe the phenomenon, carry out local computations to decide the presence or
absence of the phenomenon, and then send their local decisions to the FC that makes a final decision
after processing the local decisions.
Observations at the nodes are assumed to be
conditionally independent and identically distributed.
A Byzantine attack on such a system compromises some of the nodes
which may then intentionally send falsified local decisions to the FC to make the final
decision incorrect.
We assume that a fraction $\alpha$ of the $N$ nodes which observe the phenomenon
have been compromised by an attacker.
We consider the communication channels to be error-free.
Next, we describe the modus-operandi  
of the nodes in detail.

\subsection{Modus Operandi of the Nodes}
\label{attack model}
Based on the observations,
each node $i$  
makes a one-bit local decision $v_{i} \in \{ 0,1\}$ regarding the absence or presence of the phenomenon
using the likelihood ratio test
    \begin{equation}
        \label{eqn1}
        \dfrac{p_{Yi}^{(1)}(y_{i})}{p_{Yi}^{(0)}(y_{i})} \quad \mathop{\stackrel{v_i = 1}{\gtrless}}_{v_i = 0} \quad \lambda
    \end{equation}
    where $\lambda$ 
    is the identical threshold\footnote{It has been shown that the use of identical thresholds 
    is asymptotically optimal~\cite{tsit}.}
       used at all the sensors and $p_{Yi}^{(k)} (y_{i})$ is the conditional probability density function (PDF) of
    observation $y_i$ under the hypothesis $H_k$, where $k=0,1$.

Each node $i$, after making its one-bit local decision $v_{i}$, sends $u_i$ to the FC, where $u_i = v_i$
if $i$ is an uncompromised (honest) node, but for a compromised (Byzantine) node $i$,
$u_i$ need not be equal to $v_i$.
We denote the probabilities of detection and false alarm of each
node $i$ in the network by $P_{d}=P(v_{i}=1|H_{1})$
and $P_{f}=P(v_{i}=1|H_{0})$, respectively,
which hold for both uncompromised nodes as well as compromised nodes.

In this letter, we assume that each Byzantine decides to attack independently
relying on its own observation and decision regarding
the presence or absence of the phenomenon.
Specifically, we define the following strategies $P_{j,1}^H$, $P_{j,0}^H$ and $P_{j,1}^B$, $P_{j,0}^B$ ($j \in \{0,1\}$)
for the honest and Byzantine nodes, respectively:\\
\vspace{-0.04in}
Honest nodes:
\begin{equation}
\label{honest}
P_{1,1}^H=1-P_{0,1}^H=P^{H}(x=1|y=1)=1
\end{equation}
\begin{equation}
P_{1,0}^H=1-P_{0,0}^H=P^{H}(x=1|y=0)=0
\end{equation}
				
\noindent
Byzantine nodes:
\begin{equation}
P_{1,1}^B=1-P_{0,1}^B=P^{B}(x=1|y=1)
\end{equation}
\begin{equation}
P_{1,0}^B=1-P_{0,0}^B=P^{B}(x=1|y=0)
\end{equation}
where $P^{H}(x=a|y=b)$ ($P^{B}(x=a|y=b)$) is the probability that an honest (Byzantine) node sends $a$ to the FC
when its actual local decision is $b$. From now
onwards, we will refer to Byzantine flipping probabilities simply by $(P_{1,0}, P_{0,1})$. We also assume that the FC is not aware of the identities of Byzantine nodes and considers each node $i$ to be Byzantine with a certain probability $\alpha$.

\subsection{Performance Criterion}
\label{Testing}
The Byzantine attacker always wants to degrade the detection performance at the FC as much
as possible; in contrast, the FC wants to maximize the detection performance. 
The detection performance at the FC in the presence of the Byzantines, however, cannot be analyzed easily for the non-asymptotic case. To gain insights into the degree to which an adversary can cause performance degradation, we consider the asymptotic regime and employ the Chernoff information~\cite{HCHER} to be the network performance metric that characterizes detection performance.

If $\mathbf{u}$ is a random vector having $N$ statistically independent and identically
distributed components, $u_i$s, under both hypotheses, the optimal detector results in error probability that obeys the asymptotics
\begin{equation}
\lim_{N \rightarrow \infty} \frac{\ln P_E}{N}=-C(\pi_{1,1},\pi_{1,0}),
\end{equation}
where the Chernoff information $C$ is defined as 
\begin{eqnarray}
&&
C=\max_{0 \leq t\leq 1} -\ln (\sum_{j \in \{0,1\}} \pi_{j0}^t \pi_{j1}^{1-t}). \label{chereq} 
\end{eqnarray}
$\pi_{j0}$ and $\pi_{j1}$ in \eqref{chereq} are the conditional probabilities of $u_i=j$ given $H_0$ and $H_1$, respectively.
 Specifically, $\pi_{1,0}$ and $\pi_{1,1}$ can be calculated as
   \begin{equation}
   \label{equ1}
\pi_{1,0}=\alpha(P_{1,0}(1-P_f)+(1-P_{0,1})P_f)+(1-\alpha)P_f
\end{equation}
and
\begin{equation}
\label{equ2}
\pi_{1,1}=\alpha(P_{1,0}(1-P_d)+(1-P_{0,1})P_d)+(1-\alpha)P_d,
\end{equation}
where $\alpha$ is the fraction of Byzantine nodes.

From the Byzantine attacker's point of view, our goal is to find $P_{1,0}$ and $P_{0,1}$ that minimize Chernoff information $C$ for a given value of $\alpha$. Observe that, when $\alpha\geq 0.5$, Chernoff information can be minimized by simply making posterior probabilities equal to prior probabilities (we discuss this in more detail later in the letter). However, for $\alpha<0.5$, a closed form expression for Chernoff information is needed to find $P_{1,0}$ and $P_{0,1}$ that minimize $C$. 

\section{Closed Form Expression for the Chernoff Information}

In this section, we derive a closed form expression for the Chernoff information, when $\alpha<0.5$.\footnote{Similar results can be derived for $\alpha \geq 0.5$.}
To obtain the closed form expression for Chernoff information,
the solution of an optimization problem is required:  
$\underset{0\leq t\leq 1}{max}-\ln(\sum_{j \in \{0,1\}} \pi_{j0}^t \pi_{j1}^{1-t})$. This is easy to evaluate numerically because $(\sum_{j \in \{0,1\}} \pi_{j0}^t \pi_{j1}^{1-t})$ is convex in $t$. However, obtaining a closed form solution for this optimization problem can be tedious. Fortunately, we can find a closed form expression for the Chernoff information for $\alpha<0.5$.

\begin{lemma}
For $\alpha<0.5$, the Chernoff information between the distributions $\pi_{1,0}$ and $\pi_{1,1}$ (as given in \eqref{equ1} and \eqref{equ2}, respectively) is given by $C= -\ln (\sum_{j \in \{0,1\}} \pi_{j0}^{t^*} \pi_{j1}^{1-t^*})$ with
\begin{equation}
t^*=\frac{\ln \left( \dfrac{\ln (\pi_{1,1}/\pi_{1,0})}{\ln ((1-\pi_{1,0})/(1-\pi_{1,1}))} \dfrac{\pi_{1,1}}{1-\pi_{1,1}}\right)}{\ln \left(\dfrac{(1/\pi_{1,0})-1}{(1/\pi_{1,1})-1}\right)}.
\end{equation}
\end{lemma}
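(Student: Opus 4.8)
The plan is to exploit the convexity of $g(t)=\sum_{j\in\{0,1\}}\pi_{j0}^{t}\pi_{j1}^{1-t}$ in $t$, which is already noted in the text. Since $x\mapsto-\ln x$ is strictly decreasing, maximizing $-\ln g(t)$ over $[0,1]$ is the same as minimizing the convex function $g(t)$. The two summands are strictly log-convex exponentials that differ whenever the two conditional distributions are distinct (which holds for $\alpha<0.5$), so $g$ is strictly convex and has a unique interior minimizer $t^{*}$ characterized by the stationarity condition $g'(t^{*})=0$. The entire proof therefore reduces to solving this one equation in closed form and confirming that its root lies in $(0,1)$.

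First I would write each summand in exponential form, using $\pi_{0,0}=1-\pi_{1,0}$ and $\pi_{0,1}=1-\pi_{1,1}$, as
\[
\pi_{j0}^{t}\pi_{j1}^{1-t}=\pi_{j1}\exp\!\left(t\,\ln\frac{\pi_{j0}}{\pi_{j1}}\right),
\]
so that $g'(t)=\sum_{j}\ln(\pi_{j0}/\pi_{j1})\,\pi_{j0}^{t}\pi_{j1}^{1-t}$. Because the sum contains only the two terms $j=0,1$, the condition $g'(t^{*})=0$ rearranges into a single balance between the two exponentials,
\[
\ln\frac{1-\pi_{1,0}}{1-\pi_{1,1}}\,(1-\pi_{1,0})^{t^{*}}(1-\pi_{1,1})^{1-t^{*}}=\ln\frac{\pi_{1,1}}{\pi_{1,0}}\,\pi_{1,0}^{t^{*}}\pi_{1,1}^{1-t^{*}}.
\]

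Next I would isolate the $t^{*}$-dependence by dividing through: all powers of $t^{*}$ collapse into the single factor $\big(\tfrac{\pi_{1,0}(1-\pi_{1,1})}{\pi_{1,1}(1-\pi_{1,0})}\big)^{t^{*}}$, after which taking a logarithm turns the equation into a \emph{linear} equation in $t^{*}$. Solving it, and simplifying the denominator ratio via $((1/\pi_{1,0})-1)/((1/\pi_{1,1})-1)=\pi_{1,1}(1-\pi_{1,0})/(\pi_{1,0}(1-\pi_{1,1}))$, reproduces the claimed formula. (In my arrangement the arguments of the numerator and denominator logarithms come out as the reciprocals of the stated ones; the two resulting sign changes cancel, so the value of $t^{*}$ is unchanged.)

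The differentiation and rearrangement are mechanical and pose no real difficulty. The part that genuinely requires care is justifying that the stationary point is admissible: I must verify that the argument of the outer logarithm in the numerator is positive and that $t^{*}\in(0,1)$, so that the maximizer is interior and the expression is well defined. I expect this to follow from $\alpha<0.5$, which keeps the pair $(\pi_{1,0},1-\pi_{1,0})$ and $(\pi_{1,1},1-\pi_{1,1})$ distinct and consistently ordered (e.g.\ $\pi_{1,1}>\pi_{1,0}$ whenever $P_d>P_f$). That ordering makes both $\ln(\pi_{1,1}/\pi_{1,0})$ and $\ln((1-\pi_{1,0})/(1-\pi_{1,1}))$ strictly positive, which guarantees a positive outer-log argument and an interior root. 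Establishing this positivity and ordering is the main obstacle; everything else is routine algebra.
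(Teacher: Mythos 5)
Your overall route is the same as the paper's: reduce to minimizing $g(t)=\pi_{1,0}^{t}\pi_{1,1}^{1-t}+(1-\pi_{1,0})^{t}(1-\pi_{1,1})^{1-t}$, set $g'(t)=0$, observe that the two-term sum collapses into a single exponential balance that is linear in $t$ after taking logarithms, and then confirm $t^{*}\in[0,1]$ (the paper uses an explicit second-derivative test where you invoke strict convexity, but that is cosmetic). The algebraic derivation of the formula for $t^{*}$ is fine.

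The genuine gap is in the admissibility step, which you correctly flag as the part requiring care but then dispose of with an argument that does not work. You claim that the ordering $\pi_{1,1}>\pi_{1,0}$, i.e.\ the positivity of both $\ln(\pi_{1,1}/\pi_{1,0})$ and $\ln((1-\pi_{1,0})/(1-\pi_{1,1}))$, ``guarantees a positive outer-log argument and an interior root.'' Positivity of the two logarithms does make the formula well defined and the denominator of $t^{*}$ positive, but it does not by itself place the unconstrained critical point inside $(0,1)$: for that you need the two-sided bounds
\begin{equation*}
\frac{1-\pi_{1,1}}{\pi_{1,1}}\,\ln\frac{1-\pi_{1,0}}{1-\pi_{1,1}}\;\le\;\ln\frac{\pi_{1,1}}{\pi_{1,0}}\;\le\;\frac{1-\pi_{1,0}}{\pi_{1,0}}\,\ln\frac{1-\pi_{1,0}}{1-\pi_{1,1}},
\end{equation*}
which the paper establishes in its appendix via the logarithm inequality $1-1/x<\ln x<x-1$. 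Your own convexity framing actually offers a cleaner way to close this hole than the paper's computation, but you never state the observation that makes it work: $g(0)=\pi_{1,1}+(1-\pi_{1,1})=1$ and $g(1)=\pi_{1,0}+(1-\pi_{1,0})=1$. A strictly convex function with equal values at the two endpoints attains its minimum at a unique point of $(0,1)$, and that point is the only zero of $g'$ on all of $\mathbb{R}$; hence the solution of the stationarity equation automatically lies in $(0,1)$ and no separate verification is needed. Either supply this endpoint argument explicitly or carry out the logarithm-inequality bounds; as written, the crucial containment $t^{*}\in(0,1)$ rests on a non sequitur.
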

\begin{proof}
Observe that the problem of finding the optimal $t^*$ in \eqref{chereq} is equivalent to
\begin{equation}
\label{che}
\min_{0 \leq t\leq 1} \ln (\sum_{j \in \{0,1\}} \pi_{j0}^t \pi_{j1}^{1-t})
\end{equation}
which is a constrained minimization problem. To find $t^*$, we first perform unconstrained minimization (no constraint on the value of $t$) and later show that the solution of the unconstrained optimization problem is the same as the solution of the constrained optimization problem. In other words, the optimal $t^*$ is the same for both cases.   

By observing that logarithm is an increasing function, the optimization problem as given in \eqref{che} is equivalent to
\begin{equation}
\label{che-1}
\min_{t} [\pi_{1,0}^t \pi_{1,1}^{1-t}+(1-\pi_{1,0})^t (1-\pi_{1,1})^{1-t}].
\end{equation}
Now, performing the first derivative test, we have
\begin{eqnarray}
&&\frac{d}{dt}[\pi_{1,0}^t \pi_{1,1}^{1-t}+(1-\pi_{1,0})^t (1-\pi_{1,1})^{1-t}]\nonumber\\
&=&(1-\pi_{1,1})\left(\frac{1-\pi_{1,0}}{1-\pi_{1,1}}\right)^t \ln \left(\frac{1-\pi_{1,0}}{1-\pi_{1,1}}\right)\nonumber\\
&&+\pi_{1,1}\left(\frac{\pi_{1,0}}{\pi_{1,1}}\right)^t \ln \left(\frac{\pi_{1,0}}{\pi_{1,1}}\right)\label{Chernoff}.
\end{eqnarray}
The first derivative \eqref{Chernoff} is set to zero to find the critical points of the function: 
\begin{equation}
\label{chernoffK}
\left(\frac{(1/\pi_{1,0})-1}{(1/\pi_{1,1})-1}\right)^t=\frac{\ln (\pi_{1,1}/\pi_{1,0})}{\ln ((1-\pi_{1,0})/(1-\pi_{1,1}))} \left(\frac{\pi_{1,1}}{1-\pi_{1,1}}\right).
\end{equation} 
After some simplification, $t^*$ which satisfies \eqref{chernoffK} turns out to be
\begin{equation}
\label{t}
t^*=\frac{\ln \left( \dfrac{\ln (\pi_{1,1}/\pi_{1,0})}{\ln ((1-\pi_{1,0})/(1-\pi_{1,1}))} \dfrac{\pi_{1,1}}{1-\pi_{1,1}}\right)}{\ln \left(\dfrac{(1/\pi_{1,0})-1}{(1/\pi_{1,1})-1}\right)}.
\end{equation}

To determine whether the critical point is a minimum or a maximum, we perform the second derivative test. Since 
\begin{eqnarray}
&&\frac{d^2}{d^2t}[\pi_{1,0}^t \pi_{1,1}^{1-t}+(1-\pi_{1,0})^t (1-\pi_{1,1})^{1-t}]\nonumber\\
&=&(1-\pi_{1,1})\left(\frac{1-\pi_{1,0}}{1-\pi_{1,1}}\right)^t \left(\ln \frac{1-\pi_{1,0}}{1-\pi_{1,1}}\right)^2\nonumber\\
&&+\pi_{1,1}\left(\frac{\pi_{1,0}}{\pi_{1,1}}\right)^t \left(\ln \frac{\pi_{1,0}}{\pi_{1,1}}\right)^2
\end{eqnarray}
 is greater than zero, $t^*$ as given in \eqref{t} minimizes \eqref{che-1}. Since $0\leq t^*\leq 1$ (See proof in Appendix \ref{proof1}), $t^*$ as given in \eqref{t} is also the solution of \eqref{che}. 
\end{proof}

\section{Asymptotic Analysis of Optimal Byzantine Attack}
\label{sec4}
 
 First, we will determine the minimum fraction of Byzantines needed to blind the decision fusion scheme.
 
\subsection{Critical Power to Blind the Fusion Center}
\label{sec3}
In this section, we determine the minimum fraction of Byzantine nodes needed to make the FC ``blind'' and denote it by $\alpha_{blind}$. We say that the FC is blind if an adversary can make the data that the FC receives from the sensors such that no information is conveyed. In other words, the optimal detector at the FC cannot perform better than simply making the decision based on priors. 
\begin{lemma}
In Bayesian distributed detection, the minimum fraction of Byzantines needed to make the FC blind is $\alpha_{blind}=0.5$. 
\end{lemma}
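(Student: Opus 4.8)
The plan is to translate the operational definition of ``blinding'' into an algebraic condition on the conditional probabilities, and then reduce the problem to a feasibility question for the Byzantine flipping probabilities. The FC is blind precisely when the data $u_i$ carry no information about the hypothesis, i.e.\ when the conditional law of $u_i$ is the same under $H_0$ and $H_1$. Since $u_i\in\{0,1\}$, this is equivalent to $\pi_{1,0}=\pi_{1,1}$, in which case the likelihood ratio is identically $1$, the posterior equals the prior, and the optimal detector can do no better than deciding from priors alone (equivalently, the Chernoff information $C=0$). So the first step is to argue this equivalence and to recast the question as: for which $\alpha$ do there exist admissible $(P_{1,0},P_{0,1})\in[0,1]^2$ with $\pi_{1,0}=\pi_{1,1}$?

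The second step is a direct computation of the gap $\pi_{1,1}-\pi_{1,0}$ using \eqref{equ1} and \eqref{equ2}. Subtracting and grouping the $P_{1,0}$ and $(1-P_{0,1})$ terms, the factor $(P_d-P_f)$ can be pulled out of the $\alpha$-weighted part as well as out of the $(1-\alpha)$ part, yielding the clean factorization
\begin{equation}
\pi_{1,1}-\pi_{1,0}=(P_d-P_f)\bigl[1-\alpha(P_{1,0}+P_{0,1})\bigr].
\end{equation}
I would present this as the key identity, since everything afterwards follows from it.

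The third step uses the non-degeneracy of the local detector, namely $P_d>P_f$ (a useful likelihood-ratio test has $P_d\neq P_f$). Under this assumption the blind condition $\pi_{1,1}=\pi_{1,0}$ is equivalent to
\begin{equation}
\alpha\,(P_{1,0}+P_{0,1})=1.
\end{equation}
Because $(P_{1,0},P_{0,1})\in[0,1]^2$, the sum $P_{1,0}+P_{0,1}$ ranges over $[0,2]$, with the maximum value $2$ attained only at $P_{1,0}=P_{0,1}=1$. Hence the equation admits a solution if and only if $2\alpha\ge 1$, i.e.\ $\alpha\ge 0.5$; for $\alpha<0.5$ no choice of flipping probabilities can drive the gap to zero, so blinding is impossible. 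The minimum feasible fraction is therefore $\alpha_{blind}=0.5$, achieved by the always-flip strategy $(P_{1,0},P_{0,1})=(1,1)$.

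I expect the only real subtlety to be the first step: making precise that the operational notion of ``blind'' (optimal detector no better than prior) is exactly the equality $\pi_{1,0}=\pi_{1,1}$, and invoking $P_d>P_f$ so that the factor $(P_d-P_f)$ may be cancelled. The algebra in the second and third steps is routine once the factorization is in hand, and the feasibility argument is an elementary range calculation for the sum of two probabilities.
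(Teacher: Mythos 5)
Your proposal is correct and follows essentially the same route as the paper: reduce blinding to $\pi_{1,1}=\pi_{1,0}$, factor the gap as $(P_d-P_f)\bigl[1-\alpha(P_{1,0}+P_{0,1})\bigr]$, and observe that the resulting condition $\alpha(P_{1,0}+P_{0,1})=1$ is feasible over $[0,1]^2$ only when $\alpha\ge 0.5$, with equality forced at $(1,1)$. The only (minor) difference is that you make the cancellation hypothesis $P_d>P_f$ explicit, which the paper leaves implicit.
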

\begin{proof}
The FC becomes blind if the probability of receiving a given vector $\mathbf{u}$ is independent of the hypothesis present. Using the conditional i.i.d. assumption, under which observations at the nodes are conditionally independent and identically
distributed, the condition to make the FC blind becomes $\pi_{1,1}=\pi_{1,0}$. This is true only when
\begin{equation*}
\alpha[P_{1,0}(P_f-P_d)+(1-P_{0,1})(P_d-P_f)]+(1-\alpha)(P_d-P_f)=0.
\end{equation*}
Hence, the FC becomes blind if
\begin{equation}
\label{blind}
\alpha=\dfrac{1}{(P_{1,0}+P_{0,1})}.
\end{equation}
$\alpha$ in \eqref{blind} is minimized when $P_{1,0}$  and $P_{0,1}$ both take their largest values, i.e., $P_{1,0}=P_{0,1}=1$.
Hence, $\alpha_{blind}=0.5$.
\end{proof}

Next, we find the optimal attacking strategies which minimize the Chernoff information.

\subsection{Minimization of Chernoff Information}
 First, we minimize Chernoff information for $\alpha<0.5$. Later in the section, we generalize our results for any arbitrary $\alpha$. 
% We formally state the problem as
%\begin{equation*}
%\begin{aligned}
%& \underset{P_{1,0},P_{0,1}}{\text{minimize}}
%& &  -\ln (\sum_{j \in \{0,1\}} \pi_{j0}^{t^*} \pi_{j1}^{1-{t^*}}) \\
%& \text{subject to}
%& &  0 \leq P_{1,0}\leq 1\\
%& & & 0 \leq P_{0,1}\leq 1\\
%\end{aligned}
%\tag{P1}\label{opt-P}
%\end{equation*}
Since logarithm is an increasing function, the problem of minimizing the Chernoff information is equivalent to the following problem:
\begin{figure*}[t]
\centering
\subfigure[] {
\includegraphics[height=0.25\textheight, width=0.4\textwidth]{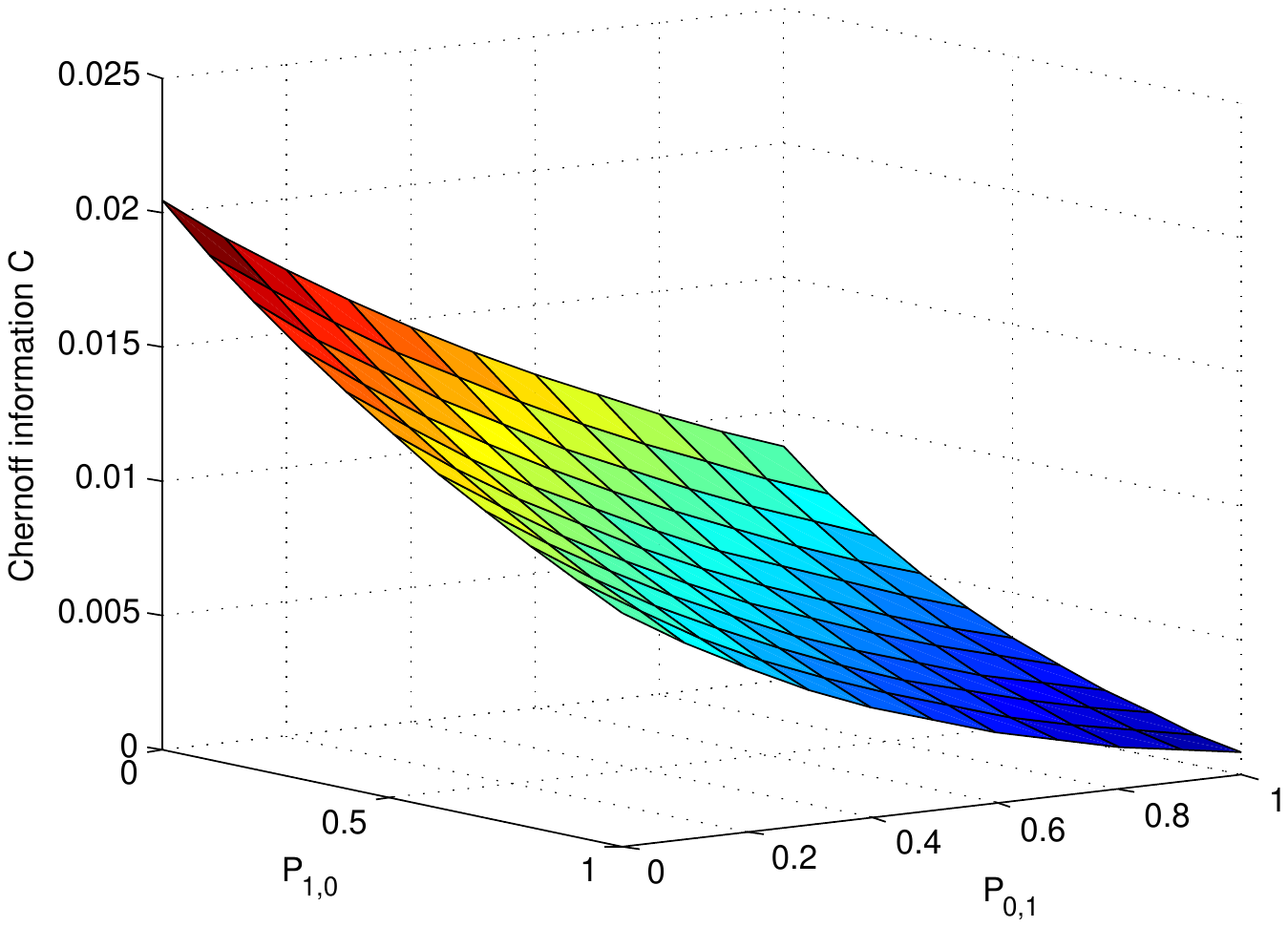}
\label{cher} }
\subfigure[]{
\includegraphics[height=0.25\textheight, width=0.4\textwidth]{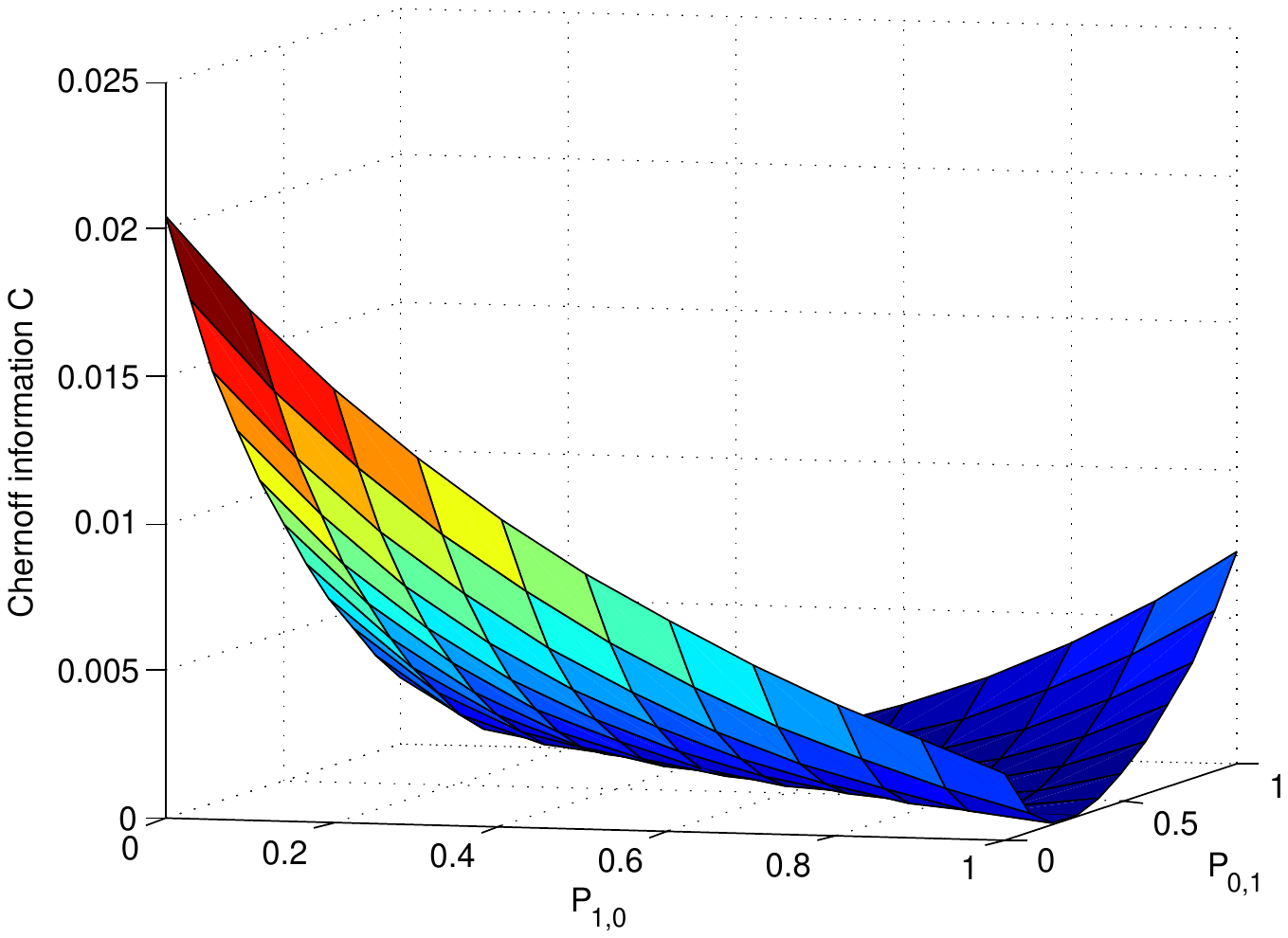}
\label{cher1}}
\caption{ \subref{cher} Chernoff information as a function of $(P_{1,0},P_{0,1})$ for $\alpha=0.4$. \subref{cher1} Chernoff information as a function of $(P_{1,0},P_{0,1})$ for $\alpha=0.8$.}
\label{local}
\end{figure*} 
\begin{equation*}
\begin{aligned}
& \underset{P_{1,0},P_{0,1}}{\text{maximize}}
& & \pi_{1,0}^{t^*} \pi_{1,1}^{1-{t^*}}+(1-\pi_{1,0})^{t^*} (1-\pi_{1,1})^{1-{t^*}} \\
& \text{subject to}
& &  0 \leq P_{1,0}\leq 1\\
& & & 0 \leq P_{0,1}\leq 1\\
\end{aligned}
\tag{P1}\label{opt-P}
\end{equation*}
where $\alpha<0.5$ and $t^*$ is as given in \eqref{t}. 

Let us denote $\tilde{C}=\pi_{1,0}^{t^*} \pi_{1,1}^{1-{t^*}}+(1-\pi_{1,0})^{t^*} (1-\pi_{1,1})^{1-{t^*}}$. Observe that, maximization of $\tilde{C}$ is equivalent to the minimization of Chernoff information $C$. Next, in Lemma~\ref{Lemma-6} we present the properties of Chernoff information $C$ (for the case when $\alpha<0.5$) with respect to $(P_{1,0},P_{0,1})$ that enable us to find the optimal attacking strategies in this case.

\begin{lemma}
\label{Lemma-6}
Let $\alpha<0.5$ and assume that the optimal $t^*$ is used in the expression for the Chernoff information. Then, the Chernoff information, $C$, is a monotonically decreasing function of $P_{1,0}$ for a fixed $P_{0,1}$. Conversely, the Chernoff information is also a monotonically decreasing function of $P_{0,1}$ for a fixed $P_{1,0}$.
\end{lemma}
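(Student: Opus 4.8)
The plan is to prove the equivalent statement that $\tilde C=\min_{0\leq t\leq 1} g(t)$, where $g(t)=\pi_{1,0}^{t}\pi_{1,1}^{1-t}+(1-\pi_{1,0})^{t}(1-\pi_{1,1})^{1-t}$, is a monotonically \emph{increasing} function of $P_{1,0}$ (and of $P_{0,1}$): since $C=-\ln\tilde C$, an increasing $\tilde C$ is exactly a decreasing $C$. To avoid differentiating the unwieldy optimizer $t^{*}$ of \eqref{t}, I would exploit the envelope structure: it suffices to show that for \emph{every fixed} $t\in[0,1]$ the map $P_{1,0}\mapsto g(t)$ is nondecreasing, because the pointwise minimum over $t$ of a family of nondecreasing functions is again nondecreasing. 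Strictness then follows because the minimizer lies in the interior $(0,1)$ (as $g(0)=g(1)=1$ while $g(t^{*})<1$ whenever $\pi_{1,1}\neq\pi_{1,0}$, i.e. whenever $\alpha<0.5$), so that $\frac{d\tilde C}{dP_{1,0}}=\left.\frac{\partial g}{\partial P_{1,0}}\right|_{t=t^{*}}$ by the envelope theorem and this is strictly positive.

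Next I would compute $\frac{\partial g}{\partial P_{1,0}}$ at fixed $t$ by the chain rule through $\pi_{1,0},\pi_{1,1}$. From \eqref{equ1}--\eqref{equ2} we have $\frac{\partial \pi_{1,0}}{\partial P_{1,0}}=\alpha(1-P_f)>0$ and $\frac{\partial \pi_{1,1}}{\partial P_{1,0}}=\alpha(1-P_d)>0$, so
\begin{equation*}
\frac{\partial g}{\partial P_{1,0}}=\alpha(1-P_f)\,g_{\pi_{1,0}}+\alpha(1-P_d)\,g_{\pi_{1,1}},
\end{equation*}
where $g_{\pi_{1,0}}=t\big[(\pi_{1,1}/\pi_{1,0})^{1-t}-((1-\pi_{1,1})/(1-\pi_{1,0}))^{1-t}\big]$ and $g_{\pi_{1,1}}=(1-t)\big[(\pi_{1,0}/\pi_{1,1})^{t}-((1-\pi_{1,0})/(1-\pi_{1,1}))^{t}\big]$. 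The signs of these are pinned down by the single inequality $\pi_{1,1}>\pi_{1,0}$, which I would establish from the identity $\pi_{1,1}-\pi_{1,0}=(P_d-P_f)\big(1-\alpha(P_{1,0}+P_{0,1})\big)$ together with $\alpha(P_{1,0}+P_{0,1})<1$ (valid for $\alpha<0.5$ since $P_{1,0}+P_{0,1}\leq 2$) and $P_d>P_f$. Given $\pi_{1,1}>\pi_{1,0}$, one gets $g_{\pi_{1,0}}>0$ and $g_{\pi_{1,1}}<0$ for $t\in(0,1)$.

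The crux, and the step I expect to be the main obstacle, is to show that the positive first term dominates the negative second, i.e. $(1-P_f)g_{\pi_{1,0}}+(1-P_d)g_{\pi_{1,1}}\geq 0$. The naive argument ``$1-P_f>1-P_d$, so the positive term wins'' does \emph{not} work, because the equal-weight combination $g_{\pi_{1,0}}+g_{\pi_{1,1}}$ can itself be negative; the magnitudes must be tracked. The desired inequality is equivalent to the ratio bound $\dfrac{g_{\pi_{1,0}}}{-g_{\pi_{1,1}}}\geq\dfrac{1-P_d}{1-P_f}$. Running the same computation for $P_{0,1}$, where $\frac{\partial\pi_{1,0}}{\partial P_{0,1}}=-\alpha P_f$ and $\frac{\partial\pi_{1,1}}{\partial P_{0,1}}=-\alpha P_d$, the $P_{0,1}$ claim becomes the \emph{complementary} bound $\dfrac{g_{\pi_{1,0}}}{-g_{\pi_{1,1}}}\leq\dfrac{P_d}{P_f}$. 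Thus both halves of the Lemma reduce to the single two-sided estimate $\frac{1-P_d}{1-P_f}\leq\frac{g_{\pi_{1,0}}}{-g_{\pi_{1,1}}}\leq\frac{P_d}{P_f}$, which I would attack by clearing denominators and using that, with $P_{0,1}$ fixed, $(\pi_{1,0},\pi_{1,1})$ traverses a line of slope $(1-P_d)/(1-P_f)$.

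Finally I would record a conceptually cleaner route that sidesteps this estimate and simultaneously explains why the threshold is $\alpha<0.5$. The falsification acts as a binary channel $W$ on the honest decision, with crossover probabilities $\alpha P_{1,0}$ and $\alpha P_{0,1}$, so the output laws under $H_0,H_1$ are the images under $W$ of the fixed inputs $(1-P_f,P_f)$ and $(1-P_d,P_d)$. Increasing $P_{1,0}$ replaces $W$ by $W'=VW$, and a direct check shows $V=W'W^{-1}$ is a genuine stochastic matrix \emph{precisely because} both crossover probabilities stay below $1/2$ when $\alpha<0.5$; hence $W'$ is a degraded version of $W$. Since the Chernoff coefficient $g(t)$ increases under such data processing for every $t$, $\tilde C$ increases and $C$ decreases, giving both monotonicities at once. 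Here the only obstacle is verifying that $V$ is stochastic, which is exactly where $\alpha<0.5$ is used.
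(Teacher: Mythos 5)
Your first route is the paper's route: the envelope/stationarity argument that kills the $dt^*/dP_{1,0}$ terms is exactly what the paper does explicitly (the terms cancel because $c_1=c_2$ by \eqref{chernoffK}), your condition $(1-P_f)g_{\pi_{1,0}}+(1-P_d)g_{\pi_{1,1}}>0$ is the paper's \eqref{lemma-2-condition1}, and your two-sided ratio bound $\frac{1-P_d}{1-P_f}\leq\frac{g_{\pi_{1,0}}}{-g_{\pi_{1,1}}}\leq\frac{P_d}{P_f}$ is the reciprocal of the paper's \eqref{chernoffcond}. The problem is that you stop precisely where the paper's Appendix B does its real work. Proving that two-sided bound is not a matter of ``clearing denominators'': the paper has to substitute the stationarity identity \eqref{chernoffK} to convert the bound into a sandwich $A<t^*<B$ on the optimizer, establish the auxiliary estimate $\frac{1-\pi_{1,1}}{\pi_{1,1}}\leq G\leq\frac{1-\pi_{1,0}}{\pi_{1,0}}$ via logarithm inequalities, and then verify $A<t^*<B$ with a further round of the inequality $\frac{x-1}{x}<\ln x<x-1$. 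Your remark that $(\pi_{1,0},\pi_{1,1})$ moves along a line of slope $(1-P_d)/(1-P_f)$ restates the chain rule you already used; it does not supply the missing estimate. As written, Route 1 has a genuine gap at its self-identified crux.

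Your second route, however, is a genuinely different and essentially complete proof, and it is cleaner than the paper's. The effective channel from $v$ to $u$ is binary with crossover probabilities $\alpha P_{1,0}$ and $\alpha P_{0,1}$; increasing either flipping probability yields $W'=WV$ with $V=W^{-1}W'$ (post-processing of the output, which is the composition order the data-processing argument needs --- your $W'=VW$ is correct only in the column-stochastic convention, so state the convention). The entries of $V$ are nonnegative exactly when $1-\alpha P_{1,0}-\alpha P_{0,1}\geq \alpha P_{0,1}\,(\alpha P_{1,0}'-\alpha P_{1,0})$, which indeed holds for all admissible flipping probabilities whenever $\alpha\leq 0.5$; this is where the hypothesis enters, as you say. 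Then, since $(p,q)\mapsto p^tq^{1-t}$ is concave and positively homogeneous of degree one, $\sum_j\pi_{j0}^t\pi_{j1}^{1-t}$ is nondecreasing under any common stochastic post-processing for every $t\in[0,1]$, hence so is the minimum over $t$, hence $C$ is nonincreasing in each of $P_{1,0},P_{0,1}$ --- both halves of the lemma at once, with no computation of $t^*$ at all. The one loose end is strictness: the data-processing inequality gives only weak monotonicity, whereas the lemma (and the paper's derivative computation) asserts strict decrease; you would need to add that $V$ is a nondegenerate stochastic matrix and $\pi_{1,1}\neq\pi_{1,0}$ for $\alpha<0.5$, or fall back on the sign of the derivative at $t^*$, to upgrade to strict. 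With that patch, Route 2 stands on its own and replaces the several pages of inequality manipulation in the paper's appendix.
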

\begin{proof}
See Appendix \ref{proof2}.
\end{proof}
Next, using Lemma~\ref{Lemma-6}, we present the optimal attacking strategies $P_{1,0}$ and $P_{0,1}$ that minimize the Chernoff information, $C$, for $0\leq \alpha\leq 1$.

\begin{theorem}
\label{th1}
The optimal attacking strategy, $(P_{1,0}^*, P_{0,1}^*)$, which minimizes the Chernoff information is
\[ (P_{1,0}^*, P_{0,1}^*)  \left\{ \begin{array}{rll}
				(p_{1,0}, p_{0,1})  & \mbox{if}\ \alpha \geq 0.5 \\
			   (1,1)  & \mbox{if}\ \alpha < 0.5
				\end{array}\right. ,
\] 
where, $(p_{1,0}, p_{0,1})$ satisfy $\alpha(p_{1,0}+p_{0,1})=1$.
\end{theorem}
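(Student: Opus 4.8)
The plan is to split the argument at the threshold $\alpha = 0.5$ and treat the two regimes separately, leaning on Lemma~\ref{Lemma-6} for the sub-critical case and on the blinding characterization (from the preceding lemma establishing $\alpha_{blind}=0.5$) for the super-critical case. In both regimes the objective is to \emph{minimize} $C$ over the box $[0,1]^2$, so the key is to identify where $C$ is driven to its smallest value.

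For $\alpha < 0.5$, I would invoke Lemma~\ref{Lemma-6} directly. It asserts that $C$ is monotonically decreasing in $P_{1,0}$ for every fixed $P_{0,1}$ and monotonically decreasing in $P_{0,1}$ for every fixed $P_{1,0}$. The two separate monotonicities chain together: for any feasible $(a,b)$ one has $C(a,b) \ge C(1,b) \ge C(1,1)$, the first inequality by decreasing-ness in the first coordinate and the second by decreasing-ness in the second. Hence the global minimizer over the unit square is the corner $(P_{1,0}^*, P_{0,1}^*) = (1,1)$, exactly the claimed answer.

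For $\alpha \ge 0.5$, I would instead use the fact that Chernoff information is nonnegative and vanishes precisely when the two conditional distributions coincide, i.e. when $\pi_{1,0} = \pi_{1,1}$. The lemma that fixes $\alpha_{blind}=0.5$ already shows this blinding condition is equivalent to $\alpha(P_{1,0}+P_{0,1}) = 1$. Since $\alpha \ge 0.5$ gives $1/\alpha \le 2$, the equation $p_{1,0} + p_{0,1} = 1/\alpha$ admits solutions with both $p_{1,0}, p_{0,1} \in [0,1]$; any such pair forces $\pi_{1,0} = \pi_{1,1}$ and therefore $C = 0$, the global minimum. Consequently every $(p_{1,0}, p_{0,1})$ satisfying $\alpha(p_{1,0}+p_{0,1}) = 1$ is optimal, matching the theorem. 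As a consistency check, at $\alpha = 0.5$ the blinding constraint collapses to $p_{1,0}=p_{0,1}=1$, agreeing with the $(1,1)$ solution obtained from the other branch.

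The routine parts are the case split and the monotonicity chaining; the genuine content is borrowed from the two lemmas. The only point needing care is the $\alpha \ge 0.5$ branch: one must confirm both that $C\ge 0$ with equality exactly at blinding (so that $0$ is truly the attainable minimum) and that the blinding locus actually intersects $[0,1]^2$ once $\alpha \ge 0.5$. I expect this feasibility check ($1/\alpha \le 2$) to be the main, though minor, obstacle, since it is precisely what distinguishes the two regimes and explains why the corner solution $(1,1)$ ceases to be the answer once blinding becomes reachable.
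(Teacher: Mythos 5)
Your proposal is correct and follows essentially the same route as the paper: the $\alpha\geq 0.5$ branch via the blinding condition $\alpha(P_{1,0}+P_{0,1})=1$ making $C=0$ attainable, and the $\alpha<0.5$ branch via the monotonicity of Lemma~\ref{Lemma-6} forcing the corner $(1,1)$. Your explicit feasibility check that $1/\alpha\leq 2$ keeps the blinding locus inside $[0,1]^2$, and the chaining $C(a,b)\geq C(1,b)\geq C(1,1)$, are slightly more careful articulations of steps the paper states without elaboration.
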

\begin{proof}
The minimum value of C is zero and it occurs when $\pi_{1,1}=\pi_{1,0}$. By \eqref{equ1} and \eqref{equ2}, $\pi_{1,1}=\pi_{1,0}$ implies
\begin{equation}
\alpha (P_{1,0}+P_{0,1})=1.\label{sum-to-one}
\end{equation}
From \eqref{sum-to-one}, when $\alpha\geq 0.5$, the attacker can always find flipping probabilities  that  make the Chernoff information equal to zero. When $\alpha=0.5$, $P_{1,0}=P_{0,1}=1$ is the optimal strategy. When $\alpha>0.5$, any pair which satisfies $P_{1,0}+P_{0,1}=\frac{1}{\alpha}$ is the optimal strategy. However, when $\alpha<0.5$, \eqref{sum-to-one} cannot be satisfied or in other words Byzantines can not make $C=0$ since $\pi_{1,1}$ can not be made equal to $\pi_{1,0}$. From Lemma \ref{Lemma-6}, when $\alpha<0.5$, the optimal attacking strategy, $(P_{1,0}, P_{0,1})$, that minimizes the Chernoff information is $(1,1)$.
\end{proof}

%Notice that, any change of the attacking distributions by a Byzantine
%when the network is not aware of this, cannot provide any performance gain
%from the Byzantines' viewpoint.
Next, to gain insights into Theorem~\ref{th1}, we present some illustrative examples that corroborate our results.

\subsection{Illustrative Examples}
In Figure~\ref{cher}, we plot the Chernoff information as a function of $(P_{1,0},P_{0,1})$ for $(P_d=0.6,P_f=0.4)$ and $\alpha=0.4$. It can be observed that for a fixed $P_{0,1}$ ($P_{1,0}$) the Chernoff information $C$ is a monotonically decreasing function of $P_{1,0}$, $P_{0,1}$ (as has been shown in Lemma~\ref{Lemma-6}). In other words, when $\alpha=0.4$, the attacking strategy, $(P_{1,0},P_{0,1})$, that minimizes the Chernoff information $C$ is $(1,1)$.

Similarly, in Figure~\ref{cher1}, we consider the scenario when the fraction of Byzantines in the network is $\alpha=0.8$. It can be seen from Figure~\ref{cher1} that the minimum value of the Chernoff information in this case is $C=0$. Notice that, the attacking strategy, $(P_{1,0},P_{0,1})$ that makes $C=0$ is not unique in this case. It can be verified that any attacking strategy which satisfies $P_{1,0}+P_{0,1}=\frac{1}{0.8}$ would make $C=0$. Thus, results presented in Figures~\ref{cher} and~\ref{cher1} corroborate our theoretical result presented in Theorem~\ref{th1}.
\section{Discussion and Future Work}
\label{sec:con}
We considered the problem of distributed Bayesian detection with Byzantine data, and characterized the power of attack analytically. We obtained closed form expressions for the optimal attacking strategies
that most degrade the detection performance.
%It can be shown that, the optimal attacking distributions have the minimax property.
The knowledge of optimal attack strategies can be further used to implement the optimal detector at the FC.
Also in addition, if only an
upper bound $\tilde{\alpha}$ on $\alpha$ is available to the FC, then, optimal attack strategies should be simply computed
using the upper bound $\tilde{\alpha}$. For any $\alpha \leq \tilde{\alpha}$, the test designed with $\tilde{\alpha}$
achieves an exponent no smaller than $C(\tilde{\alpha})$. In the future, we plan to extend our analysis to the non-asymptotic case.

%\section*{Acknowledgment}
%This work was supported in part by ARO under Grant W911NF-13-2-0040 and National Science Council of Taiwan, under grants NSC 99-2221-E-011-158 -MY3, NSC 101-2221-E-011-069 -MY3. Han's work was completed during his visit to Syracuse University from 2012 to 2013.

\appendices
\section{Proof of $0\leq t^*\leq 1$}
\label{proof1}
 First, we show that $t^*\leq 1$. We start from the following equality:
 \begin{equation}
  \dfrac{\pi_{1,1}}{\pi_{1,0}}-1=\left(\frac{1-\pi_{1,0}}{\pi_{1,0}}-\frac{1-\pi_{1,1}}{\pi_{1,0}}\right)= \dfrac{1-\pi_{1,0}}{\pi_{1,0}}\left(1-\dfrac{1-\pi_{1,1}}{1-\pi_{1,0}}\right).\label{eq-1}
 \end{equation}
By applying the logarithm inequality $1-\dfrac{1}{x}<\ln(x)<(x-1),\; \forall x>0$,  to \eqref{eq-1}, we have 
 \begin{eqnarray*}
 \ln\dfrac{\pi_{1,1}}{\pi_{1,0}}&<& \dfrac{\pi_{1,1}}{\pi_{1,0}}-1\\
 &=&  \dfrac{1-\pi_{1,0}}{\pi_{1,0}}\left(1-\dfrac{1-\pi_{1,1}}{1-\pi_{1,0}}\right)\\ 
 &\leq& \dfrac{1-\pi_{1,0}}{\pi_{1,0}}\ln\dfrac{1-\pi_{1,0}}{1-\pi_{1,1}}.
 \end{eqnarray*}
Now,
\begin{eqnarray*} 
&&
 \ln\dfrac{\pi_{1,1}}{\pi_{1,0}}\leq \dfrac{1-\pi_{1,0}}{\pi_{1,0}}\ln\dfrac{1-\pi_{1,0}}{1-\pi_{1,1}}\\
&\Leftrightarrow&
 {\dfrac{\ln (\pi_{1,1}/\pi_{1,0})}{\ln ((1-\pi_{1,0})/(1-\pi_{1,1}))} \dfrac{\pi_{1,1}}{1-\pi_{1,1}}}\leq { \dfrac{(1/\pi_{1,0})-1}{(1/\pi_{1,1})-1}}\\
  &\Leftrightarrow&
 \frac{\ln \left( \dfrac{\ln (\pi_{1,1}/\pi_{1,0})}{\ln ((1-\pi_{1,0})/(1-\pi_{1,1}))} \dfrac{\pi_{1,1}}{1-\pi_{1,1}}\right)}{\ln \left(\dfrac{(1/\pi_{1,0})-1}{(1/\pi_{1,1})-1}\right)}\leq 1\\
   &\Leftrightarrow&
t^*\le 1.
 \end{eqnarray*}

Next, we show that $t^*\geq 0$. First we prove that the denominator of $t^*$ is positive. Since $\pi_{1,1}>\pi_{1,0}$ for $P_d>P_f$ and $\alpha<0.5$, we have
  \begin{eqnarray}
 &&
\pi_{1,1}>\pi_{1,0}\\
 &\Leftrightarrow&
 { \dfrac{(1/\pi_{1,0})-1}{(1/\pi_{1,1})-1}}> 1\\
 &\Leftrightarrow&
 {\ln \left(\dfrac{(1/\pi_{1,0})-1}{(1/\pi_{1,1})-1}\right)}> 0.
 \end{eqnarray}
 
Next we prove that the numerator of $t^*$ is nonnegative, and then $t^*$ is nonnegative.
We start from the following equality:
 \begin{equation}
  1-\dfrac{\pi_{1,0}}{\pi_{1,1}}=\left(\frac{1-\pi_{1,0}}{\pi_{1,1}}-\frac{1-\pi_{1,1}}{\pi_{1,1}}\right)= \dfrac{1-\pi_{1,1}}{\pi_{1,1}}\left(\dfrac{1-\pi_{1,0}}{1-\pi_{1,1}}-1\right).\label{eq-2}
 \end{equation} 
By applying the logarithm inequality $1-\dfrac{1}{x}<\ln(x)<(x-1),\; \forall x>0$,  to \eqref{eq-2}, we have 
 \begin{eqnarray*}
 \ln\dfrac{\pi_{1,1}}{\pi_{1,0}}&>& 1-\dfrac{\pi_{1,0}}{\pi_{1,1}}\\
 &=&\dfrac{1-\pi_{1,1}}{\pi_{1,1}}\left(\dfrac{1-\pi_{1,0}}{1-\pi_{1,1}}-1\right)\\
 &\geq& \dfrac{1-\pi_{1,1}}{\pi_{1,1}}\ln\dfrac{1-\pi_{1,0}}{1-\pi_{1,1}}.
 \end{eqnarray*}
Now,
 \begin{eqnarray*}
 &&
 \ln\dfrac{\pi_{1,1}}{\pi_{1,0}}\geq \dfrac{1-\pi_{1,1}}{\pi_{1,1}}\ln\dfrac{1-\pi_{1,0}}{1-\pi_{1,1}}\\
&\Leftrightarrow&
 { \dfrac{\ln (\pi_{1,1}/\pi_{1,0})}{\ln ((1-\pi_{1,0})/(1-\pi_{1,1}))} \dfrac{\pi_{1,1}}{1-\pi_{1,1}}}\geq 1\\
 &\Leftrightarrow&
 {\ln \left( \dfrac{\ln (\pi_{1,1}/\pi_{1,0})}{\ln ((1-\pi_{1,0})/(1-\pi_{1,1}))} \dfrac{\pi_{1,1}}{1-\pi_{1,1}}\right)}\geq 0.
 \end{eqnarray*}
% This completes our proof of $0\leq t^*\leq 1$.

\section{Proof of Lemma~\ref{Lemma-6}}
\label{proof2}
To show that, for the optimal $t^*$ and $\alpha<0.5$, Chernoff information, $C$, is monotonically decreasing function of $P_{1,0}$ while keeping $P_{0,1}$ fixed is equivalent to showing that  $\tilde{C}$, is monotonically increasing function of $P_{1,0}$ while keeping $P_{0,1}$ fixed. 
%To show this, we first find the condition that the parameter $t$ should satisfy so that $\tilde{C}$ is monotonically increasing function of $P_{1,0}$ while keeping $P_{0,1}$ fixed. 
Differentiating both sides of $\tilde{C}$ with respect to $P_{1,0}$, we get
\begin{eqnarray*}
\dfrac{d\tilde{C}}{P_{1,0}}&=& \pi_{1,0}^{t^*}\pi_{1,1}^{(1-t^*)}\left(\dfrac{dt^*}{P_{1,0}}\ln \dfrac{\pi_{1,0}}{\pi_{1,1}}+(1-t^*)\dfrac{\pi_{1,1}'}{\pi_{1,1}}+t^*\dfrac{\pi_{1,0}'}{\pi_{1,0}}\right)\\
&+&(1-\pi_{1,0})^{t^*}(1-\pi_{1,1})^{(1-t^*)}\left(\dfrac{dt^*}{P_{1,0}}\ln \dfrac{1-\pi_{1,0}}{1-\pi_{1,1}}-(1-t^*)\dfrac{\pi_{1,1}'}{1-\pi_{1,1}}-t^*\dfrac{\pi_{1,0}'}{1-\pi_{1,0}}\right)
\end{eqnarray*}
In the above equation,
\begin{small}
\begin{eqnarray*}
\dfrac{dt^*}{P_{1,0}}&=& \dfrac{\left(\ln\dfrac{\pi_{1,1}}{\pi_{1,0}}+\ln\dfrac{1-\pi_{1,0}}{1-\pi_{1,1}}\right)\left(\dfrac{G'}{G}+\dfrac{\pi_{1,1}'}{\pi_{1,1}}+\dfrac{\pi_{1,1}'}{1-\pi_{1,1}}\right)-\left(\ln G+\ln\dfrac{\pi_{1,1}}{1-\pi_{1,1}}\right)\left(\dfrac{\pi_{1,1}'}{\pi_{1,1}}-\dfrac{\pi_{1,0}'}{\pi_{1,0}}+\dfrac{\pi_{1,1}'}{1-\pi_{1,1}}-\dfrac{\pi_{1,0}'}{1-\pi_{1,0}}\right)}{\left(\ln\dfrac{\pi_{1,1}}{\pi_{1,0}}+\ln\dfrac{1-\pi_{1,0}}{1-\pi_{1,1}}\right)^2}
\end{eqnarray*}
\end{small}
where $G=\dfrac{\ln(\pi_{1,1}/\pi_{1,0})}{\ln((1-\pi_{1,0})/(1-\pi_{1,1}))}$ and 
\begin{eqnarray*}
\dfrac{G'}{G}&=& \dfrac{\ln\dfrac{1-\pi_{1,0}}{1-\pi_{1,1}}\left(\dfrac{\pi_{1,1}'}{\pi_{1,1}}-\dfrac{\pi_{1,0}'}{\pi_{1,0}}\right)-\ln\dfrac{\pi_{1,1}}{\pi_{1,0}}\left(\dfrac{\pi_{1,1}'}{1-\pi_{1,1}}-\dfrac{\pi_{1,0}'}{1-\pi_{1,0}}\right)}{\ln\dfrac{\pi_{1,1}}{\pi_{1,0}} \ln\dfrac{1-\pi_{1,0}}{1-\pi_{1,1}}}.
\end{eqnarray*}
Let us denote $a_1=\ln G+\ln(\pi_{1,1}/(1-\pi_{1,1}))$, $a_2=\ln(\pi_{1,1}/\pi_{1,0})+\ln((1-\pi_{1,0})/(1-\pi_{1,1}))$, $b_1=(\pi_{1,1}'/\pi_{1,1})+(\pi_{1,1}'/(1-\pi_{1,1}))$, $b_2=(\pi_{1,0}'/\pi_{1,0})+(\pi_{1,0}'/(1-\pi_{1,0}))$, $c_1=\pi_{1,0}^{t^*}\pi_{1,1}^{1-t^*}\ln(\pi_{1,1}/\pi_{1,0})$,
$c_2=(1-\pi_{1,0})^{t^*}(1-\pi_{1,1})^{1-t^*}\ln((1-\pi_{1,0})/(1-\pi_{1,1}))$, $d_1=((1-t^*)(\pi_{1,1}'/\pi_{1,1})+t^*(\pi_{1,0}/\pi_{1,0}))\pi_{1,0}^{t^*}\pi_{1,1}^{1-t^*}$ and $d_2=((1-t^*)(\pi_{1,1}'/(1-\pi_{1,1}))+t^*(\pi_{1,0}/(1-\pi_{1,0})))(1-\pi_{1,0})^{t^*}(1-\pi_{1,1})^{1-t^*}$.
Now, $\tilde{C}$, is monotonically increasing function of $P_{1,0}$ while keeping $P_{0,1}$ fixed if
\begin{eqnarray*}
&&
a_1[b_1c_1+b_2c_2]+a_2[-(G'/G)c_1+b_1c_2]+a_2^2d_1>a_1[b_1c_2+b_2c_1]+a_2[-(G'/G)c_2+b_1c_1]+a_2^2d_2\\
&\Leftrightarrow&
a_2^2(d_1-d_2)>(c_1-c_2)(a_1(b_2-b_1)+a_2((G'/G)+b_1))\\
&\Leftrightarrow&
a_2^2(d_1-d_2)>0
\end{eqnarray*}
where the last inequality follows from the fact that $(c_1-c_2)=0$ as given in \eqref{chernoffK}.

%\begin{eqnarray*}
%\dfrac{d\tilde{C}}{P_{1,0}}&=&[t \alpha(1-P_f) \pi_{1,0}^{t-1} \pi_{1,1}^{1-t} + \pi_{1,0}^{t} \pi_{1,1}^{-t}(1-t)\alpha(1-P_d)\\ 
%&-&(1-\pi_{1,0})^{t} (1-\pi_{1,1})^{-t}(1-t)\alpha(1-P_d)-t \alpha (1-P_f)(1-\pi_{1,0})^{t-1} (1-\pi_{1,1})^{1-t}].
%\end{eqnarray*}

Now, to show that $\dfrac{d\tilde{C}}{P_{1,0}}>0$ is equivalent to show that $(d_1-d_2)>0$. In other words,
\begin{equation}
{t^*}(1-P_f)\left[\left(\dfrac{\pi_{1,1}}{\pi_{1,0}}\right)^{1-{t^*}}-\left(\dfrac{1-\pi_{1,1}}{1-\pi_{1,0}}\right)^{1-{t^*}} \right]
+(1-{t^*})(1-P_d)\left[\left(\dfrac{\pi_{1,0}}{\pi_{1,1}}\right)^{{t^*}}-\left(\dfrac{1-\pi_{1,0}}{1-\pi_{1,1}}\right)^{t^*} \right]> 0.\label{lemma-2-condition1}
\end{equation}
Note that,
\begin{equation*}
\left[\left(\dfrac{\pi_{1,1}}{\pi_{1,0}}\right)^{1-{t^*}}-\left(\dfrac{1-\pi_{1,1}}{1-\pi_{1,0}}\right)^{1-{t^*}} \right]\geq 0;\;\left[\left(\dfrac{\pi_{1,0}}{\pi_{1,1}}\right)^{t^*}-\left(\dfrac{1-\pi_{1,0}}{1-\pi_{1,1}}\right)^{t^*} \right]\leq 0.
\end{equation*}
Hence, \eqref{lemma-2-condition1} can be simplified to,
\begin{equation}
\label{chernoffcond1}
\dfrac{1-P_f}{1-P_d}> \dfrac{(1-{t^*})\left[\left(\dfrac{1-\pi_{1,0}}{1-\pi_{1,1}}\right)^{t^*}-\left(\dfrac{\pi_{1,0}}{\pi_{1,1}}\right)^{t^*} \right]}{{t^*}\left[\left(\dfrac{\pi_{1,1}}{\pi_{1,0}}\right)^{1-{t^*}}-\left(\dfrac{1-\pi_{1,1}}{1-\pi_{1,0}}\right)^{1-{t^*}} \right]}.
\end{equation} 
Similarly, for the optimal ${t^*}$ and $\alpha<0.5$, Chernoff information, $C$, is monotonically decreasing function of $P_{0,1}$ while keeping $P_{1,0}$ fixed if $(d_1-d_2)>0$,
%\begin{eqnarray*}
%\dfrac{d\tilde{C}}{P_{0,1}}&=&[{t^*} \alpha(-P_f) \pi_{1,0}^{{t^*}-1} \pi_{1,1}^{1-{t^*}} + \pi_{1,0}^{t^*} \pi_{1,1}^{-{t^*}}(1-{t^*})\alpha(-P_d)\\ 
%&-&(1-\pi_{1,0})^{t^*} (1-\pi_{1,1})^{-{t^*}}(1-{t^*})\alpha(-P_d)-{t^*} \alpha (-P_f)(1-\pi_{1,0})^{{t^*}-1} (1-\pi_{1,1})^{1-{t^*}}] > 0
%\end{eqnarray*}
which is equivalent to show that,
\begin{equation}
{t^*}(-P_f)\left[\left(\dfrac{\pi_{1,1}}{\pi_{1,0}}\right)^{1-{t^*}}-\left(\dfrac{1-\pi_{1,1}}{1-\pi_{1,0}}\right)^{1-{t^*}} \right]
+(1-{t^*})(-P_d)\left[\left(\dfrac{\pi_{1,0}}{\pi_{1,1}}\right)^{t^*}-\left(\dfrac{1-\pi_{1,0}}{1-\pi_{1,1}}\right)^{t^*} \right]> 0.\label{lemma-2-condition2}
\end{equation}
Furthermore, \eqref{lemma-2-condition2} can be simplified to
\begin{equation}
\label{chernoffcond2}
\dfrac{P_f}{P_d}< \dfrac{(1-{t^*})\left[\left(\dfrac{1-\pi_{1,0}}{1-\pi_{1,1}}\right)^{t^*}-\left(\dfrac{\pi_{1,0}}{\pi_{1,1}}\right)^{t^*} \right]}{{t^*}\left[\left(\dfrac{\pi_{1,1}}{\pi_{1,0}}\right)^{1-{t^*}}-\left(\dfrac{1-\pi_{1,1}}{1-\pi_{1,0}}\right)^{1-{t^*}} \right]}.
\end{equation}
Combining  \eqref{chernoffcond1} and \eqref{chernoffcond2}, the condition to make Lemma \ref{Lemma-6} true becomes
\begin{equation}
\label{chernoffcond}
\dfrac{P_f}{P_d}< \dfrac{(1-{t^*})\left[\left(\dfrac{1-\pi_{1,0}}{1-\pi_{1,1}}\right)^{t^*}-\left(\dfrac{\pi_{1,0}}{\pi_{1,1}}\right)^{t^*} \right]}{{t^*}\left[\left(\dfrac{\pi_{1,1}}{\pi_{1,0}}\right)^{1-{t^*}}-\left(\dfrac{1-\pi_{1,1}}{1-\pi_{1,0}}\right)^{1-{t^*}} \right]}< \dfrac{1-P_f}{1-P_d}.
\end{equation}
Note that right hand inequality in \eqref{chernoffcond} can be rewritten as
\begin{eqnarray*} 
&& \left(\dfrac{1}{t^*}-1\right)\left[\left(\dfrac{1-\pi_{1,0}}{1-\pi_{1,1}}\right)^{t^*}-\left(\dfrac{\pi_{1,0}}{\pi_{1,1}}\right)^{t^*} \right] <  
\dfrac{1-P_f}{1-P_d} \left[\left(\dfrac{\pi_{1,1}}{\pi_{1,0}}\right)^{1-{t^*}}-\left(\dfrac{1-\pi_{1,1}}{1-\pi_{1,0}}\right)^{1-{t^*}} \right] \\
&\Leftrightarrow  & \left(\dfrac{1}{t^*}-1\right)\left[\left(\dfrac{1-\pi_{1,0}}{1-\pi_{1,1}}\right)^{t^*}-\left(\dfrac{\pi_{1,0}}{\pi_{1,1}}\right)^{t^*} \right] < \dfrac{1-P_f}{1-P_d} \left[\left(\dfrac{\pi_{1,1}}{\pi_{1,0}}\right)\left(\dfrac{\pi_{1,0}}
{\pi_{1,1}}\right)^{t^*}-\left(\dfrac{1-\pi_{1,1}}{1-\pi_{1,0}}\right)\left(\dfrac{1-\pi_{1,0}}{1-\pi_{1,1}}\right)^{t^*} \right] \\
&\Leftrightarrow  &
\left(\dfrac{1-\pi_{1,0}}{1-\pi_{1,1}}\right)^{t^*}\left[\dfrac{1-P_f}{1-P_d}\left(\dfrac{1-\pi_{1,1}}{1-\pi_{1,0}}\right)+\left(\dfrac{1}{t^*}-1\right)\right] < \left(\dfrac{\pi_{1,0}}{\pi_{1,1}}\right)^{t^*}\left[\dfrac{1-P_f}{1-P_d}\left(\dfrac{\pi_{1,1}}{\pi_{1,0}}\right)+\left(\dfrac{1}{t^*}-1\right)\right]\\
&\Leftrightarrow  &
\left(\dfrac{(1/\pi_{1,0})-1}{(1/\pi_{1,1})-1}\right)^{t^*}\left[\dfrac{1-P_f}{1-P_d}\left(\dfrac{1-\pi_{1,1}}{1-\pi_{1,0}}\right)+\left(\dfrac{1}{t^*}-1\right)\right] < \left[\dfrac{1-P_f}{1-P_d}\left(\dfrac{\pi_{1,1}}{\pi_{1,0}}\right)+\left(\dfrac{1}{t^*}-1\right)\right].
\end{eqnarray*}
Using the result from \eqref{chernoffK}, the above equation can be written as
\begin{equation*}
\dfrac{\ln (\pi_{1,1}/\pi_{1,0})}{\ln \left(\dfrac{(1-\pi_{1,0})}{(1-\pi_{1,1})}\right)} \left(\dfrac{\pi_{1,1}}{1-\pi_{1,1}}\right)< \dfrac{\left[\dfrac{1-P_f}{1-P_d}\left(\dfrac{\pi_{1,1}}{\pi_{1,0}}\right)+\left(\dfrac{1}{t^*}-1\right)\right]}{\left[\dfrac{1-P_f}{1-P_d}\left(\dfrac{1-\pi_{1,1}}{1-\pi_{1,0}}\right)+\left(\dfrac{1}{t^*}-1\right)\right]}.
\end{equation*}
Using the fact that $G=\dfrac{\ln (\pi_{1,1}/\pi_{1,0})}{\ln \left(\dfrac{(1-\pi_{1,0})}{(1-\pi_{1,1})}\right)}$, we get
\begin{equation*}
G< \dfrac{\left[\dfrac{1-P_f}{1-P_d}\left(\dfrac{1}{\pi_{1,0}}\right)+\left(\dfrac{1}{t^*}-1\right)\left(\dfrac{1}{\pi_{1,1}}\right)\right]}{\left[\dfrac{1-P_f}{1-P_d}\left(\dfrac{1}{1-\pi_{1,0}}\right)+\left(\dfrac{1}{t^*}-1\right)\left(\dfrac{1}{1-\pi_{1,1}}\right)\right]}.
\end{equation*}
After some simplification, the above condition can be written as
\begin{eqnarray*}
&& \dfrac{1-P_f}{1-P_d}\left[\dfrac{G}{1-\pi_{1,0}}-\dfrac{1}{\pi_{1,0}}\right]< \left(\dfrac{1}{t^*}-1\right)\left[\dfrac{1}{\pi_{1,1}}-\dfrac{G}{1-\pi_{1,1}}\right]\\
&\Leftrightarrow  &
\left(\dfrac{1-P_f}{1-P_d}\right)\left(\dfrac{\pi_{1,1}}{\pi_{1,0}}\right)\left(\dfrac{1-\pi_{1,1}}{1-\pi_{1,0}}\right)[\pi_{1,0}(G+1)-1]< \left(\dfrac{1}{t^*}-1\right)[1-\pi_{1,1}(G+1)]
\end{eqnarray*}
\begin{equation}
\label{bk}
\dfrac{1}{t^*}[\pi_{1,1}(G+1)-1]<\left(\dfrac{1-P_f}{1-P_d}\right)\left(\dfrac{\pi_{1,1}}{\pi_{1,0}}\right)\left(\dfrac{1-\pi_{1,1}}{1-\pi_{1,0}}\right)[1-\pi_{1,0}(G+1)]+[\pi_{1,1}(G+1)-1].
\end{equation}

Notice that, in the above equation 
\begin{equation}
\label{inter}
\pi_{1,1}(G+1)\geq 1\; \text{and} \;\pi_{1,0}(G+1)\leq 1
\end{equation}
 or equivalently $\dfrac{1-\pi_{1,1}}{\pi_{1,1}} \leq G\leq \dfrac{1-\pi_{1,0}}{\pi_{1,0}}$. The second inequality in \eqref{inter} follows from the fact that $\ln\left(\dfrac{\pi_{1,1}}{\pi_{1,0}}\right)\geq \ln\left(\frac{1-\pi_{1,0}}{1-\pi_{1,1}}\right)\dfrac{1-\pi_{1,1}}{\pi_{1,1}}$. Using logarithm inequality, we have
$\ln\left(\dfrac{\pi_{1,1}}{\pi_{1,0}}\right)\geq \left(1-\dfrac{\pi_{1,0}}{\pi_{1,1}}\right)=
\left(\dfrac{1-\pi_{1,1}}{\pi_{1,1}}\right)\left(\dfrac{1-\pi_{1,0}}{1-\pi_{1,1}}-1\right)\geq
 \ln\left(\dfrac{1-\pi_{1,0}}{1-\pi_{1,1}}\right)\dfrac{1-\pi_{1,1}}{\pi_{1,1}}$.
 Similarly, to show that the second inequality in \eqref{inter} is true we show
 $\ln\left(\dfrac{\pi_{1,1}}{\pi_{1,0}}\right)\leq \ln\left(\dfrac{1-\pi_{1,0}}{1-\pi_{1,1}}\right)\dfrac{1-\pi_{1,0}}{\pi_{1,0}}$. Using logarithm inequality,
$\ln\left(\dfrac{\pi_{1,1}}{\pi_{1,0}}\right)\leq \left(\dfrac{\pi_{1,1}}{\pi_{1,0}}-1\right)=
\left(\dfrac{1-\pi_{1,0}}{\pi_{1,0}}\right)\left(1-\dfrac{1-\pi_{1,1}}{1-\pi_{1,0}}\right)\leq
 \ln\left(\dfrac{1-\pi_{1,0}}{1-\pi_{1,1}}\right)\dfrac{1-\pi_{1,0}}{\pi_{1,0}}$.
 Using these results we can then write \eqref{bk} in the form below,
\begin{equation*}
\dfrac{[\pi_{1,1}(G+1)-1]}{\left(\dfrac{1-P_f}{1-P_d}\right)\left(\dfrac{\pi_{1,1}}{\pi_{1,0}}\right)\left(\dfrac{1-\pi_{1,1}}{1-\pi_{1,0}}\right)[1-\pi_{1,0}(G+1)]+[\pi_{1,1}(G+1)-1]}< {t^*}
\end{equation*}
\begin{equation}
\label{chernoffCOND1}
\Leftrightarrow
\dfrac{1}{\left(\dfrac{1-P_f}{1-P_d}\right)\left(\dfrac{\pi_{1,1}}{\pi_{1,0}}\right)\left(\dfrac{1-\pi_{1,1}}{1-\pi_{1,0}}\right)\dfrac{[1-\pi_{1,0}(G+1)]}{[\pi_{1,1}(G+1)-1]}+1}< {t^*}.
\end{equation}
Similarly, the left hand side inequality in \eqref{chernoffcond} can be written as,
\begin{eqnarray*} 
&&
\dfrac{P_f}{P_d} \left[\left(\dfrac{\pi_{1,1}}{\pi_{1,0}}\right)^{1-{t^*}}-\left(\dfrac{1-\pi_{1,1}}{1-\pi_{1,0}}\right)^{1-{t^*}} \right]< \left(\dfrac{1}{{t^*}}-1\right)\left[\left(\dfrac{1-\pi_{1,0}}{1-\pi_{1,1}}\right)^{t^*}-\left(\dfrac{\pi_{1,0}}{\pi_{1,1}}\right)^{t^*} \right]   \\
&\Leftrightarrow  &
 \dfrac{P_f}{P_d} \left[\left(\dfrac{\pi_{1,1}}{\pi_{1,0}}\right)\left(\dfrac{\pi_{1,0}}
{\pi_{1,1}}\right)^{t^*}-\left(\dfrac{1-\pi_{1,1}}{1-\pi_{1,0}}\right)\left(\dfrac{1-\pi_{1,0}}{1-\pi_{1,1}}\right)^{t} \right] < \left(\dfrac{1}{t^*}-1\right)\left[\left(\dfrac{1-\pi_{1,0}}{1-\pi_{1,1}}\right)^{t^*}-\left(\dfrac{\pi_{1,0}}{\pi_{1,1}}\right)^{t^*} \right]\\
&\Leftrightarrow  &
  \left(\dfrac{\pi_{1,0}}{\pi_{1,1}}\right)^{t}\left[\dfrac{P_f}{P_d}\left(\dfrac{\pi_{1,1}}{\pi_{1,0}}\right)+\left(\dfrac{1}{t^*}-1\right)\right] < \left(\dfrac{1-\pi_{1,0}}{1-\pi_{1,1}}\right)^{t}\left[\dfrac{P_f}{P_d}\left(\dfrac{1-\pi_{1,1}}{1-\pi_{1,0}}\right)+\left(\dfrac{1}{t^*}-1\right)\right]
 \\
 &\Leftrightarrow  &
 \left[\dfrac{P_f}{P_d}\left(\dfrac{\pi_{1,1}}{\pi_{1,0}}\right)+\left(\dfrac{1}{t^*}-1\right)\right] < \left(\dfrac{(1/\pi_{1,0})-1}{(1/\pi_{1,1})-1}\right)^{t^*}\left[\dfrac{P_f}{P_d}\left(\dfrac{1-\pi_{1,1}}{1-\pi_{1,0}}\right)+\left(\dfrac{1}{t^*}-1\right)\right].
\end{eqnarray*}
Using the results from \eqref{chernoffK}, the above equation can be written as,
\begin{equation*}
 \dfrac{\left[\dfrac{P_f}{P_d}\left(\dfrac{\pi_{1,1}}{\pi_{1,0}}\right)+\left(\dfrac{1}{t^*}-1\right)\right]}{\left[\dfrac{P_f}{P_d}\left(\dfrac{1-\pi_{1,1}}{1-\pi_{1,0}}\right)+\left(\dfrac{1}{t^*}-1\right)\right]}< \dfrac{\ln (\pi_{1,1}/\pi_{1,0})}{\ln \left(\dfrac{(1-\pi_{1,0})}{(1-\pi_{1,1})}\right)} \left(\dfrac{\pi_{1,1}}{1-\pi_{1,1}}\right).
\end{equation*}
Lets denote $G=\dfrac{\ln (\pi_{1,1}/\pi_{1,0})}{\ln \left(\dfrac{(1-\pi_{1,0})}{(1-\pi_{1,1})}\right)}$, we get
\begin{equation*}
 \dfrac{\left[\dfrac{P_f}{P_d}\left(\dfrac{1}{\pi_{1,0}}\right)+\left(\dfrac{1}{t^*}-1\right)\left(\dfrac{1}{\pi_{1,1}}\right)\right]}{\left[\dfrac{P_f}{P_d}\left(\dfrac{1}{1-\pi_{1,0}}\right)+\left(\dfrac{1}{t^*}-1\right)\left(\dfrac{1}{1-\pi_{1,1}}\right)\right]}< G.
\end{equation*}
After some simplification the above condition can be written as,
\begin{eqnarray*}
&&
 \left(\dfrac{1}{t^*}-1\right)\left[\dfrac{1}{\pi_{1,1}}-\dfrac{G}{1-\pi_{1,1}}\right] < \dfrac{P_f}{P_d}\left[\dfrac{G}{1-\pi_{1,0}}-\dfrac{1}{\pi_{1,0}}\right]\ \\
&\Leftrightarrow  &
 \left(\dfrac{1}{t^*}-1\right)[1-\pi_{1,1}(G+1)]< \left(\dfrac{P_f}{P_d}\right)\left(\dfrac{\pi_{1,1}}{\pi_{1,0}}\right)\left(\dfrac{1-\pi_{1,1}}{1-\pi_{1,0}}\right)[\pi_{1,0}(G+1)-1]\\
 &\Leftrightarrow  &
\left(\dfrac{P_f}{P_d}\right)\left(\dfrac{\pi_{1,1}}{\pi_{1,0}}\right)\left(\dfrac{1-\pi_{1,1}}{1-\pi_{1,0}}\right)[1-\pi_{1,0}(G+1)]+[\pi_{1,1}(G+1)-1]< \dfrac{1}{t^*}[\pi_{1,1}(G+1)-1].
\end{eqnarray*}
Using \eqref{inter}, the condition can be written as 
\begin{equation*}
{t^*} < \dfrac{[\pi_{1,1}(G+1)-1]}{\left(\dfrac{P_f}{P_d}\right)\left(\dfrac{\pi_{1,1}}{\pi_{1,0}}\right)\left(\dfrac{1-\pi_{1,1}}{1-\pi_{1,0}}\right)[1-\pi_{1,0}(G+1)]+[\pi_{1,1}(G+1)-1]}
\end{equation*}
\begin{equation}
\label{chernoffCOND2}
{t^*}< \dfrac{1}{\left(\dfrac{P_f}{P_d}\right)\left(\dfrac{\pi_{1,1}}{\pi_{1,0}}\right)\left(\dfrac{1-\pi_{1,1}}{1-\pi_{1,0}}\right)\dfrac{[1-\pi_{1,0}(G+1)]}{[\pi_{1,1}(G+1)-1]}+1}.
\end{equation}
Now from \eqref{chernoffCOND1} and \eqref{chernoffCOND2}, Lemma \ref{Lemma-6} is true if
\begin{equation}
\footnotesize
\label{chernoffCOND}
A=\dfrac{1}{\left(\dfrac{1-P_f}{1-P_d}\right)\left(\dfrac{\pi_{1,1}}{\pi_{1,0}}\right)\left(\dfrac{1-\pi_{1,1}}{1-\pi_{1,0}}\right)\dfrac{[1-\pi_{1,0}(G+1)]}{[\pi_{1,1}(G+1)-1]}+1}
< {t^*}< \dfrac{1}{\dfrac{P_f}{P_d}\left(\dfrac{\pi_{1,1}}{\pi_{1,0}}\right)\left(\dfrac{1-\pi_{1,1}}{1-\pi_{1,0}}\right)\dfrac{[1-\pi_{1,0}(G+1)]}{[\pi_{1,1}(G+1)-1]}+1}=B.
\end{equation}
Next, we show that, the optimal $t^*$ is with in the region $(A,B)$. We start from the inequality
\begin{eqnarray*}
&&
\dfrac{P_f}{P_d}\dfrac{\pi_{1,1}}{\pi_{1,0}}<1<\dfrac{1-P_f}{1-P_d}\dfrac{1-\pi_{1,1}}{1-\pi_{1,0}}\\
&\Leftrightarrow&
\dfrac{P_f}{P_d}\dfrac{\pi_{1,1}}{\pi_{1,0}}\dfrac{[1-\pi_{1,0}(G+1)]}{[\pi_{1,1}(G+1)-1]}<\dfrac{1-\pi_{1,0}(G+1)}{\pi_{1,1}(G+1)-1}<\dfrac{1-P_f}{1-P_d}\dfrac{1-\pi_{1,1}}{1-\pi_{1,0}}\dfrac{[1-\pi_{1,0}(G+1)]}{[\pi_{1,1}(G+1)-1]}
\end{eqnarray*}
Let us denote, $Y=\left(\dfrac{\pi_{1,1}}{\pi_{1,0}}\right)\left(\dfrac{1-\pi_{1,1}}{1-\pi_{1,0}}\right)\dfrac{[1-\pi_{1,0}(G+1)]}{[\pi_{1,1}(G+1)-1]}$, then the above condition can be written as,
\begin{eqnarray}
&&
\dfrac{P_f}{P_d}Y\dfrac{1-\pi_{1,0}}{1-\pi_{1,1}}<\dfrac{1-\pi_{1,0}(G+1)}{\pi_{1,1}(G+1)-1}<\dfrac{1-P_f}{1-P_d}Y\dfrac{\pi_{1,0}}{\pi_{1,1}}\label{above}
\end{eqnarray}
Next, we use the log inequality, $\dfrac{x-1}{x}<\ln(x)<(x-1),\; \forall x>0$, to derive further results. Let us focus our attention to the left hand side inequality in \eqref{above} 
\begin{eqnarray}
&&
\dfrac{P_f}{P_d}Y\dfrac{1-\pi_{1,0}}{1-\pi_{1,1}}<\dfrac{1-\pi_{1,0}(G+1)}{\pi_{1,1}(G+1)-1} \nonumber \\
&\Leftrightarrow&
P_f Y\left[\dfrac{G\pi_{1,1}}{1-\pi_{1,1}}-1\right]<P_d \left[1-\dfrac{G\pi_{1,0}}{1-\pi_{1,0}}\right]\nonumber \\
&\Leftrightarrow&
P_f Y{\ln\left(G \dfrac{\pi_{1,1}}{1-\pi_{1,1}}\right)} <P_d {\ln\left(\dfrac{1}{G}\dfrac{1-\pi_{1,0}}{\pi_{1,0}}\right)}\label{above1}
\end{eqnarray}
Now, let us focus our attention to the right hand side inequality in \eqref{above} 
\begin{eqnarray}
&&
\dfrac{1-\pi_{1,0}(G+1)}{\pi_{1,1}(G+1)-1}<\dfrac{1-P_f}{1-P_d}Y\dfrac{\pi_{1,0}}{\pi_{1,1}} \nonumber \\
&\Leftrightarrow&
(1-P_d)\left(\dfrac{1-\pi_{1,0}}{G \pi_{1,0}}-1\right) 
<(1-P_f)Y{\left(1-\dfrac{1-\pi_{1,1}}{G\pi_{1,1}}\right)}\nonumber \\
&\Leftrightarrow&
(1-P_d){\ln\left(\dfrac{1}{G}\dfrac{1-\pi_{1,0}}{\pi_{1,0}}\right)} <(1-P_f)Y{\ln\left(G \frac{\pi_{1,1}}{1-\pi_{1,1}}\right)}\label{above2}
\end{eqnarray}
Now using the results from \eqref{above1} and \eqref{above2}, we can deduce that 
\begin{eqnarray}
&&
\left(\dfrac{P_f}{P_d}\right)Y<\dfrac{\ln\left(\dfrac{1}{G}\dfrac{1-\pi_{1,0}}{\pi_{1,0}}\right)}{\ln\left(G \dfrac{\pi_{1,1}}{1-\pi_{1,1}}\right)} <\left(\dfrac{1-P_f}{1-P_d}\right)Y \nonumber \\
&\Leftrightarrow&
\dfrac{1}{1+\left(\dfrac{1-P_f}{1-P_d}\right)Y}
< \dfrac{1}{1+\dfrac{\ln\left(\dfrac{1}{G}\dfrac{1-\pi_{1,0}}{\pi_{1,0}}\right)}{\ln\left(G \dfrac{\pi_{1,1}}{1-\pi_{1,1}}\right)}}< \dfrac{1}{1+\left(\dfrac{P_f}{P_d}\right)Y}\label{above3}
\end{eqnarray}
which is true from the fact that for $a>0,b>0$, $\dfrac{1}{1+a}<\dfrac{1}{1+b}$ iff $b<a$. Next, observe that, $t^*$ as given in \eqref{t} can be written as 
\begin{equation*}
t^*=\dfrac{\ln \left(G \dfrac{\pi_{1,1}}{1-\pi_{1,1}}\right)}{\ln \left(\dfrac{(1/\pi_{1,0})-1}{(1/\pi_{1,1})-1}\right)}=\dfrac{\ln(G) + \ln \left(\dfrac{\pi_{1,1}}{1-\pi_{1,1}}\right)}{\ln  \left(\dfrac{\pi_{1,1}}{\pi_{1,0}}\right)+\ln\left(\dfrac{1-\pi_{1,0}}{1-\pi_{1,1}}\right)}.
\end{equation*}
Observe that, $\ln \left(G \dfrac{\pi_{1,1}}{1-\pi_{1,1}}\right)\geq 0$ and $\ln\left(\dfrac{1}{G}\dfrac{1-\pi_{1,0}}{\pi_{1,0}}\right)\geq 0$ or equivalently $ \left(G \dfrac{\pi_{1,1}}{1-\pi_{1,1}}\right)\geq 1$ and $\left(\dfrac{1}{G}\dfrac{1-\pi_{1,0}}{\pi_{1,0}}\right)\geq 1$ from \eqref{inter}. Now,
\begin{equation*}
t^*=\dfrac{1}{1+\dfrac{\ln  \left(\dfrac{\pi_{1,1}}{\pi_{1,0}}\right)+\ln\left(\dfrac{1-\pi_{1,0}}{1-\pi_{1,1}}\right)-\ln(G) - \ln \left(\dfrac{\pi_{1,1}}{1-\pi_{1,1}}\right)}{\ln(G) + \ln \left(\dfrac{\pi_{1,1}}{1-\pi_{1,1}}\right)}}=\dfrac{1}{1+\dfrac{\ln\left(\dfrac{1}{G}\dfrac{1-\pi_{1,0}}{\pi_{1,0}}\right)}{\ln\left(G \dfrac{\pi_{1,1}}{1-\pi_{1,1}}\right)}}. 
\end{equation*}
Which along with \eqref{above3} implies that
\begin{eqnarray*}
&&
\dfrac{1}{1+\left(\dfrac{1-P_f}{1-P_d}\right)Y}
< t^*< \dfrac{1}{1+\left(\dfrac{P_f}{P_d}\right)Y}
\end{eqnarray*}
or in other words, $A< t^*<B$. This completes our proof.

\bibliographystyle{IEEEtran}
\bibliography{Conf,Book,Journal}

\end{document}